\documentclass[11pt,a4paper]{article}
\usepackage[T1]{fontenc}
\usepackage[utf8]{inputenc}
\usepackage{lmodern}
\usepackage[margin=25mm]{geometry}
\usepackage{amsmath,amssymb,amsthm,bm}
\usepackage{booktabs,tabularx,array}
\usepackage[numbers,sort&compress]{natbib}
\usepackage{microtype}
\usepackage{xcolor}
\usepackage{hyperref}
\hypersetup{colorlinks=true,linkcolor=blue!45!black,citecolor=blue!45!black,
 urlcolor=blue!45!black,pdftitle={Geometry as Thermodynamics: Entropy Stationarity and Horizon Residues},
 pdfauthor={Wen-Xiang Chen}}
\numberwithin{equation}{section}
\newtheorem{proposition}{Proposition}[section]
\newtheorem{lemma}[proposition]{Lemma}
\newcommand{\dd}{\mathrm{d}}
\newcommand{\ii}{\mathrm{i}}
\newcommand{\Res}{\operatorname*{Res}}
\newcommand{\SW}{S_{\mathrm{W}}}
\newcommand{\Htemp}{T_{\mathrm{H}}}
\newcommand{\QH}{\mathcal{Q}_{\mathrm{H}}}
\newcommand{\MK}{M_{\mathrm{K}}}
\newcommand{\cS}{\mathcal{S}}
\newcommand{\cE}{\mathcal{E}}
\newcommand{\eps}{\varepsilon}
\title{\bfseries Geometry as Thermodynamics:\\[2mm]
Entropy Stationarity and Horizon Residues}
\author{Wen-Xiang Chen\thanks{Email: \href{mailto:wxchen4277@qq.com}{wxchen4277@qq.com}.}\\[1mm]
\small School of Electronic Information,\\
\small Guangzhou City University of Technology, Guangzhou, China}
\date{}

\begin{document}
\maketitle
\begin{abstract}
Thermodynamic descriptions of gravity involve both variational conditions for spacetime dynamics and analytic structures associated with horizons. We examine their relation while keeping their assumptions distinct. First, we present an explicit derivation of the Einstein equation from the established null-vector entropy functional, including the null constraint, the matter term, and the integration constant associated with the cosmological constant. Second, for an analytic static spherical geometry with a nondegenerate Killing horizon, we define a meromorphic radial one-form whose residue is the inverse of twice the surface gravity. Euclidean regularity then fixes the Hawking temperature. Combining this residue with the Einstein--Hilbert Noether charge gives a normalized contour representation of the Wald entropy. The construction reproduces the Schwarzschild temperature, entropy, and Smarr relation, but does not constitute an independent microscopic derivation of the area law. We establish the limits of a stronger identification between pole structure and dynamics: an asymptotically flat family can retain the Schwarzschild horizon residue, area, surface gravity, and mass while violating the vacuum Einstein equation, and an extremal charged solution possesses a higher-order pole. We also show why exponentiating an unspecified entropy functional does not produce a universal entropy residue. The resulting framework separates entropy stationarity, horizon analyticity, and charge normalization, providing explicit consistency tests for further thermodynamic interpretations of gravitational singularities.
\end{abstract}
\noindent\textbf{Keywords:} gravitational thermodynamics; entropy variational principle; Hawking temperature; complex analysis; Noether charge

\section{Introduction}\label{sec:intro}
Black-hole mechanics, horizon entropy, and quantum particle creation establish a precise connection between spacetime geometry and thermodynamics \cite{Bardeen1973,Bekenstein1973,Hawking1975}. The Unruh effect further associates a temperature with acceleration even in flat spacetime \cite{Unruh1976}. These results motivate the question of whether gravitational dynamics can be formulated as a thermodynamic consistency condition, and whether complex analysis can make the associated horizon data easier to identify.

Several established approaches address different parts of this question. Jacobson derived the Einstein equation from a local Clausius relation and an entropy proportional to area \cite{Jacobson1995}. Padmanabhan and Paranjape obtained gravitational field equations by extremizing a functional of null vectors while keeping the metric fixed \cite{PadmanabhanParanjape2007}. Wald and Iyer related the entropy of a stationary black hole to a diffeomorphism Noether charge \cite{Wald1993,IyerWald1994}. The Euclidean approach fixes the thermal period by regularity \cite{GibbonsHawking1977}. These constructions do not have identical inputs, and agreement between their results should not be confused with a derivation of all of their inputs from a single postulate.

The thermodynamic interpretation is also broader than any particular entropic-force proposal. For example, Verlinde's construction \cite{Verlinde2011} addresses forces through changes of information on screens, whereas the null-vector variational principle constrains the background field equations. A review of thermodynamic approaches is given in Ref.~\cite{Padmanabhan2010}. Topological methods for calculating Hawking temperature, including the Robson--Villari--Biancalana construction \cite{Robson2019}, provide a further reason to examine what is encoded in analytic horizon data.

The present work connects three precisely specified objects: an entropy functional whose stationarity yields gravitational dynamics, a meromorphic one-form that encodes a nondegenerate horizon temperature, and a Noether charge that supplies the entropy normalization. Its specific contribution is a consistency analysis of this connection, including a normalized contour representation and counterexamples to an equivalence between pole order and the Einstein equation. The underlying entropy variational principle and the standard black-hole thermodynamic relations are established results and are not claimed as new.

A central distinction is between a spacetime singularity and a singularity of an analytically continued auxiliary quantity. The former may indicate geodesic incompleteness under the hypotheses of a singularity theorem \cite{Penrose1965}; the latter depends on which function or differential is continued. A regular nonextremal horizon can generate a pole in a radial propagation differential without being a curvature singularity. This distinction is essential if singularities are to motivate, rather than merely label, a thermodynamic explanation.

Section~\ref{sec:variation} derives the field equation. Section~\ref{sec:exponential} examines exponentiated entropy. Sections~\ref{sec:temperature} and \ref{sec:charges} develop the residue and charge construction. Section~\ref{sec:tests} gives explicit checks and counterexamples. Section~\ref{sec:discussion} states the physical interpretation and its limitations.

\section{Entropy stationarity and the Einstein equation}\label{sec:variation}
\subsection{Conventions and variational domain}
We work in four dimensions, with signature $(-,+,+,+)$ and $c=\hbar=k_{\mathrm B}=1$, retaining Newton's constant $G$. The connection is Levi--Civita and our curvature convention is
\begin{equation}
 [\nabla_c,\nabla_d]v^a=R^a{}_{bcd}v^b,
 \qquad R_{bd}=R^a{}_{bad}.
\end{equation}
The stress tensor is symmetric and covariantly conserved, $\nabla^a T_{ab}=0$. Let $V$ be a finite region on which the fields are sufficiently differentiable for the integrations by parts below. We vary an auxiliary, dimensionless vector $n^a$ at fixed metric and stress tensor, subject to $n^a n_a=0$. Variations have compact support in the interior of $V$. This removes boundary variations without setting the value of a boundary integral to zero.

The vector $n^a$ is not required to be a Killing field. It parametrizes arbitrary local null directions; the field equation follows from imposing stationarity for every such direction. A horizon Killing field $\chi^a$ will be introduced separately in Section~\ref{sec:temperature}. Smoothness suffices here; local real analyticity is an additional assumption only in the complex analysis.

\subsection{Why an isolated derivative-square term is insufficient}
Consider the trial functional
\begin{equation}
 \cS_0[n]=\frac{1}{8\pi G}\int_V\dd^4x\sqrt{-g}\,
 \nabla_a n_b\nabla^a n^b.
 \label{eq:trial}
\end{equation}
At fixed metric, its unconstrained variation is
\begin{equation}
 \delta\cS_0=-\frac{1}{4\pi G}\int_V\dd^4x\sqrt{-g}\,
 (\Box n_b)\delta n^b,
 \label{eq:trialvariation}
\end{equation}
up to the boundary variation. Thus the Euler--Lagrange equation is a vector wave equation, not the Einstein equation. A null constraint adds a multiplier times $n_b$ but does not remove this differential operator.

For an exact Killing field one has $\Box\chi^a=-R^a{}_b\chi^b$ in our convention. This identity cannot be imposed on arbitrary variations of $n^a$. A generic curved spacetime does not admit an exact Killing field in every null direction. Nor does an approximate local symmetry justify replacing the full Euler--Lagrange operator by a curvature contraction without controlling the approximation. The tensor $g^{ac}g^{bd}$ alone also lacks the antisymmetries of a Riemann-type tensor. These points require a different contraction in the functional.

\subsection{The antisymmetrized functional}
For Einstein gravity, introduce
\begin{equation}
 P_{ab}{}^{cd}=\frac{1}{32\pi G}
 (\delta_a^c\delta_b^d-\delta_a^d\delta_b^c),
 \qquad \nabla_eP_{ab}{}^{cd}=0.
 \label{eq:P}
\end{equation}
With indices lowered, this tensor has the algebraic symmetries needed in the null-vector entropy principle \cite{PadmanabhanParanjape2007}. The functional is
\begin{align}
 \cS[n]&=\int_V\dd^4x\sqrt{-g}
 \left(4P_{ab}{}^{cd}\nabla_c n^a\nabla_d n^b-T_{ab}n^a n^b\right)
 \nonumber\\
 &=\int_V\dd^4x\sqrt{-g}\left\{
 \frac{1}{8\pi G}\left[(\nabla_a n^a)^2-\nabla_a n^b\nabla_b n^a\right]
 -T_{ab}n^a n^b\right\}.
 \label{eq:functional}
\end{align}
In these conventions $[G]=L^2$, $[T_{ab}]=L^{-4}$, and $\cS$ is dimensionless. The relative coefficient of the two derivative terms is essential. The coupling $G$ is an input, calibrated to gravitational physics; stationarity does not determine its numerical value.

To exhibit the curvature term directly, define
\begin{equation}
 J^a=n^a\nabla_b n^b-n^b\nabla_b n^a.
\end{equation}
The product rule and the curvature commutator give
\begin{align}
 \nabla_a J^a
 &= (\nabla_a n^a)^2-\nabla_a n^b\nabla_b n^a
 +n^b(\nabla_b\nabla_a n^a-\nabla_a\nabla_b n^a)
 \nonumber\\
 &= (\nabla_a n^a)^2-\nabla_a n^b\nabla_b n^a-R_{ab}n^a n^b.
 \label{eq:identity}
\end{align}
Consequently,
\begin{equation}
 \cS[n]=\frac{1}{8\pi G}\int_{\partial V}J^a\dd\Sigma_a
 +\int_V\dd^4x\sqrt{-g}\,
 \left(\frac{R_{ab}}{8\pi G}-T_{ab}\right)n^a n^b.
 \label{eq:boundarybulk}
\end{equation}
The cancellation of second derivatives of the freely varied vector into a curvature commutator is the mechanism missing from Eq.~\eqref{eq:trial}.

\subsection{Null constraint and cosmological constant}
Define $\cE_{ab}=R_{ab}/(8\pi G)-T_{ab}$ and enforce the constraint using
\begin{equation}
 \cS_\lambda[n]=\cS[n]+
 \int_V\dd^4x\sqrt{-g}\,\lambda(x)g_{ab}n^a n^b.
\end{equation}
Independent variation of $n^a$ and $\lambda$ yields
\begin{equation}
 (\cE_{ab}+\lambda g_{ab})n^b=0,
 \qquad n^a n_a=0.
 \label{eq:EL}
\end{equation}
Contracting the first equation with $n^a$ and requiring stationarity for every null direction gives
\begin{equation}
 (R_{ab}-8\pi GT_{ab})n^a n^b=0
 \quad\text{for every null } n^a.
 \label{eq:nullprojection}
\end{equation}

\begin{lemma}\label{lem:null}
If a symmetric tensor $X_{ab}$ on a Lorentzian tangent space satisfies $X_{ab}k^a k^b=0$ for every null vector $k^a$, then $X_{ab}=\phi g_{ab}$ for some scalar $\phi$.
\end{lemma}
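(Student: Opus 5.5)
I will work pointwise in an orthonormal frame $\{e_0,e_1,e_2,e_3\}$ with $g_{ab}=\mathrm{diag}(-1,1,1,1)$. Every null vector is then a multiple of $k=e_0+\hat{u}$, where $\hat{u}=u^ie_i$ is a unit spatial vector. Since the hypothesis is homogeneous of degree two in $k^a$, it reduces to
\begin{equation*}
X_{00}+2X_{0i}u^i+X_{ij}u^iu^j=0\qquad\text{for all } u\in S^2 .
\end{equation*}
The plan is to read this as a polynomial identity on the unit sphere and extract the components of $X_{ab}$ one at a time.

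First, I replace $u$ by $-u$ and subtract the two equations. This gives $X_{0i}u^i=0$ for every unit $u$, so $X_{0i}=0$. Adding the two equations instead gives $X_{ij}u^iu^j=-X_{00}$ for all unit $u$. Second, I apply this to $u=e_i$, which gives $X_{ii}=-X_{00}$ for each $i$ (no sum). I then apply it to $u=(e_i+e_j)/\sqrt2$ with $i\neq j$. Using the diagonal values just found, this gives $X_{ij}=0$.

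Hence $X_{ij}=-X_{00}\,\delta_{ij}$ and $X_{0i}=0$, so $X_{ab}=\phi\,g_{ab}$ with $\phi=-X_{00}=X_{11}$. The scalar $\phi$ is frame independent because it equals $\tfrac14 g^{ab}X_{ab}$. If $X_{ab}$ is a smooth tensor field satisfying the hypothesis at every point, then $\phi=\tfrac14 X^a{}_a$ is a smooth function.

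There is no real obstacle in this argument. The only care needed is to justify the reduction to $k^0=1$ by rescaling, and to note that in four dimensions there are enough unit directions $\pm e_i$ and $(e_i+e_j)/\sqrt2$ to separate all the components. The same argument works in any dimension $n\ge3$. In the application to Eq.~\eqref{eq:nullprojection}, I will take $X_{ab}=R_{ab}-8\pi G T_{ab}$. The resulting scalar $\phi$ is then to be fixed using the contracted Bianchi identity together with $\nabla^aT_{ab}=0$, which yields the cosmological-constant integration constant.
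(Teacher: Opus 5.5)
Your proof is correct and follows the same route as the paper: the substitution $u\to-u$ gives $X_{0i}=0$ and reduces the hypothesis to $X_{ij}u^iu^j=-X_{00}$ on the unit sphere. Your evaluation at $u=e_i$ and $u=(e_i+e_j)/\sqrt2$ supplies the polarization step that the paper asserts without spelling out.
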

\begin{proof}
Choose an orthonormal frame and let $k^a=(1,\bm u)$ with $\bm u\cdot\bm u=1$. Comparing the equations for $\bm u$ and $-\bm u$ gives $X_{0i}u^i=0$, hence $X_{0i}=0$. The remaining equation is $X_{ij}u^i u^j=-X_{00}$ for every unit vector. It implies $X_{ij}=-X_{00}\delta_{ij}$. Thus $X_{ab}=\phi\eta_{ab}$ with $\phi=-X_{00}$, and the result is tensorial.
\end{proof}

The lemma implies
\begin{equation}
 R_{ab}-8\pi GT_{ab}=\phi g_{ab}.
 \label{eq:phi}
\end{equation}
Taking a divergence, using stress conservation and the contracted Bianchi identity, gives
\begin{equation}
 \nabla_b\phi=\frac12\nabla_bR,
 \qquad \phi=\frac12 R-\Lambda,
 \label{eq:bianchi}
\end{equation}
where $\Lambda$ is constant on a connected region. Substitution yields
\begin{equation}
 \boxed{R_{ab}-\frac12Rg_{ab}+\Lambda g_{ab}=8\pi GT_{ab}.}
 \label{eq:einstein}
\end{equation}
In particular, $\phi$ itself need not be constant in the presence of matter. In vacuum the trace gives $R=4\Lambda$, so $R_{ab}=\Lambda g_{ab}$; this does not require the Weyl tensor to vanish or the spacetime to have constant sectional curvature.

Conversely, Eq.~\eqref{eq:einstein} implies $\cE_{ab}=(R/2-\Lambda)g_{ab}/(8\pi G)$, so Eq.~\eqref{eq:EL} holds for any null $n^a$ with the corresponding multiplier. This establishes the equivalence within the stated variational assumptions.

\subsection{Meaning of stationarity}
The matter contraction $T_{ab}n^a n^b$ is generally nonzero for null $n^a$; only $g_{ab}n^a n^b$ vanishes. It represents a null energy-flux term in the entropy construction, rather than an independently derived microscopic matter entropy.

Equation~\eqref{eq:boundarybulk} becomes a boundary functional on shell because its bulk term is proportional to $n^a n_a$. This does not identify its value on an arbitrary region with a black-hole entropy without further boundary and normalization choices. Moreover, stationarity does not establish a strict maximum or a stability theorem. Indeed, on shell, compactly supported changes between null fields with identical boundary data leave the bulk contribution zero. Thermodynamic stability requires a physical ensemble and variations of the corresponding states, not merely the auxiliary-vector variation used above.

\section{Analytic structure of exponentiated entropy}\label{sec:exponential}
A spacetime functional $\cS[g,n]$ is a number once its fields and integration domain have been specified. To obtain a complex function one must additionally prescribe a family of fields or domains parametrized by a complex coordinate $z$. Even if this produces a well-defined $\cS(z)$, the choice
\begin{equation}
 \mathcal{Z}(z)=\exp[\cS(z)]
 \label{eq:exponential}
\end{equation}
does not by itself identify either a statistical partition function or a charge-generating differential.

Three elementary cases expose the difficulty. If $\cS(z)$ is holomorphic at $z=0$, then $\mathcal{Z}(z)$ is holomorphic and nonzero there, so $\Res_{0}[\mathcal{Z}(z)\dd z]=0$. If $\cS$ has a pole, its exponential has an essential singularity; for example,
\begin{equation}
 e^{c/z}=\sum_{j=0}^{\infty}\frac{c^j}{j!}z^{-j},\qquad c\ne0.
\end{equation}
A pole of $\mathcal{Z}$ can instead arise from a logarithm,
\begin{equation}
 \cS(z)=-p\log(z/\ell)+s_{\mathrm{reg}}(z),
 \qquad \mathcal{Z}(z)=\left(\frac{\ell}{z}\right)^p e^{s_{\mathrm{reg}}(z)},
 \label{eq:logentropy}
\end{equation}
where $\ell$ makes the logarithm dimensionless. A positive integer $p$ produces a single-valued pole of order $p$; noninteger $p$ generally introduces a branch point. Neither the existence of this logarithm nor the choice $p=1$ follows from entropy stationarity.

There is a separate normalization obstruction. An additive shift $\cS\mapsto\cS+C$ preserves the variational equation but rescales any residue of $\mathcal{Z}\dd z$ by $e^C$. Its proposed equality to a mass or an entropy is therefore not fixed by stationarity. In a canonical ensemble, the partition function instead has the form
\begin{equation}
 Z(\beta)=\operatorname{Tr}e^{-\beta H},
 \qquad S=\log Z+\beta\langle H\rangle,
\end{equation}
and a semiclassical gravitational contribution involves $e^{-I_E}$ with specified boundary conditions \cite{GibbonsHawking1977}. An exponential of entropy alone supplies none of this ensemble structure.

\section{A meromorphic differential for horizon temperature}\label{sec:temperature}
\subsection{Static spherical geometry and Euclidean regularity}
Consider the static metric
\begin{equation}
 \dd s^2=-e^{2\psi(r)}F(r)\dd t^2+\frac{\dd r^2}{F(r)}+r^2\dd\Omega^2.
 \label{eq:metric}
\end{equation}
Let $r=r_h>0$ be an outer nondegenerate Killing horizon, with
\begin{equation}
 F(r_h)=0,\qquad F'(r_h)>0,\qquad
 |\psi(r_h)|<\infty.
 \label{eq:horizon}
\end{equation}
Assume that $F$ and $\psi$ are real analytic near $r_h$. The Killing field is $\chi^a=(\partial_t)^a$, with its normalization fixed once and for all. In an asymptotically flat example we choose $g_{tt}\to-1$ at infinity. Define
\begin{equation}
 \kappa=\frac12e^{\psi_h}F'_h,
 \qquad \psi_h=\psi(r_h),\quad F'_h=F'(r_h).
 \label{eq:kappa}
\end{equation}
Writing $x=r-r_h$ gives $F=F'_h x+O(x^2)$. On the exterior side, the proper distance is $\rho=2\sqrt{x/F'_h}+O(x^{3/2})$. After $t=-\ii\tau$, the leading Euclidean metric is
\begin{equation}
 \dd s_E^2=\dd\rho^2+\kappa^2\rho^2\dd\tau^2+r_h^2\dd\Omega^2+cdots.
 \label{eq:rindler}
\end{equation}
Regularity of the $(\rho,\tau)$ plane requires
\begin{equation}
 \beta_H=\frac{2\pi}{\kappa},\qquad \Htemp=\frac{\kappa}{2\pi}.
 \label{eq:temperature}
\end{equation}
The interpretation of this period as a temperature uses the usual semiclassical thermal framework. No Einstein equation is needed to establish the local conical-regularity condition for the given metric.

\subsection{Residue and its normalization}
Complexify the local radial coordinate, $z=r-r_h$, and define the one-form
\begin{equation}
 \omega_T=\frac{e^{-\psi(r_h+z)}}{F(r_h+z)}\dd z.
 \label{eq:omega}
\end{equation}
It is the differential of the tortoise coordinate: radial null propagation obeys $\dd t=\pm\omega_T$ on the real section. This supplies a geometrical reason for choosing the differential.

\begin{proposition}\label{prop:residue}
Under the assumptions \eqref{eq:horizon}, $\omega_T$ has a simple pole at $z=0$, with
\begin{equation}
 \mathcal{R}_h\equiv\Res_{z=0}\omega_T
 =\frac{e^{-\psi_h}}{F'_h}=\frac{1}{2\kappa}.
 \label{eq:residue}
\end{equation}
With the Euclidean normalization \eqref{eq:temperature},
\begin{equation}
 \boxed{\beta_H=4\pi\mathcal{R}_h,\qquad
 \Htemp=\frac{1}{4\pi\mathcal{R}_h}.}
 \label{eq:restemp}
\end{equation}
\end{proposition}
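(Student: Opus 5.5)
The plan is to reduce everything to the local Taylor data of $F$ and $\psi$ at the horizon. Real analyticity near $r_h$ lets $F(r_h+z)$ and $\psi(r_h+z)$ extend to holomorphic functions on a small disc $|z|<\delta$. Since $F(r_h)=0$, the function $F$ factors as $F(r_h+z)=z\,G(z)$ with $G$ holomorphic and $G(0)=F'_h$. The nondegeneracy condition $F'_h>0$ in \eqref{eq:horizon} makes $G(0)\neq0$, so after shrinking $\delta$ the function $G$ has no zeros in the disc. The factor $e^{-\psi(r_h+z)}$ is holomorphic and nonvanishing, because $\psi_h$ is finite. Hence
\begin{equation*}
 \omega_T=\frac{1}{z}\,h(z)\,\dd z,\qquad h(z)=\frac{e^{-\psi(r_h+z)}}{G(z)},
\end{equation*}
where $h$ is holomorphic with $h(0)=e^{-\psi_h}/F'_h\neq0$. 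This shows that $z=0$ is a simple pole, not a removable point and not a higher-order pole.

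Next I will read off the residue as $\Res_{z=0}\omega_T=h(0)=e^{-\psi_h}/F'_h$. Equivalently, one can use the standard formula $\lim_{z\to0} z\,e^{-\psi}/F$ together with L'Hôpital's rule. Comparing with the definition \eqref{eq:kappa}, $\kappa=\tfrac12 e^{\psi_h}F'_h$, gives $1/(2\kappa)=e^{-\psi_h}/F'_h$. This yields \eqref{eq:residue}.

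For the boxed relations I will invoke the Euclidean regularity result \eqref{eq:temperature}, $\beta_H=2\pi/\kappa$, established in the preceding subsection independently of any field equation. Substituting $1/\kappa=2\mathcal{R}_h$ gives $\beta_H=4\pi\mathcal{R}_h$, and then $\Htemp=1/\beta_H=1/(4\pi\mathcal{R}_h)$.

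There is no genuine obstacle here; the computation is short. The one point needing care is the order of the pole. It rests on two facts: the nondegeneracy $F'_h\neq0$, which rules out the double pole that appears in the extremal case, and the finiteness of $\psi_h$, which keeps the prefactor $e^{-\psi}$ analytic and nonzero at $z=0$. I would also note that the residue is independent of the choice of branch or contour, since the pole is isolated within the disc of analyticity. The result also depends on the fixed normalization of $\chi^a=\partial_t$: rescaling $t$ shifts $\psi$ by a constant and rescales $\kappa$ and $\mathcal{R}_h$ consistently, so the relation $\beta_H=4\pi\mathcal{R}_h$ holds once the normalization is fixed.
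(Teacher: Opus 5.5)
Your proposal is correct and takes essentially the paper's route: the paper also expands $F$ and $e^{-\psi}$ locally, reads off the coefficient of $\dd z/z$ as $e^{-\psi_h}/F'_h=1/(2\kappa)$ (recording the unneeded constant Laurent term as well), and substitutes this into $\beta_H=2\pi/\kappa$. Your factorization $F(r_h+z)=z\,G(z)$ with $G(0)=F'_h\neq0$ is simply a slightly more explicit way of justifying that the pole is simple.
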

\begin{proof}
The convergent local expansions of $F$ and $e^{-\psi}$ give
\begin{equation}
 \omega_T=\frac{e^{-\psi_h}}{F'_h}
 \left[\frac{1}{z}-\left(\psi'_h+\frac{F''_h}{2F'_h}\right)+O(z)\right]\dd z.
 \label{eq:laurent}
\end{equation}
The coefficient of $\dd z/z$ is Eq.~\eqref{eq:residue}; inserting it into Eq.~\eqref{eq:temperature} proves Eq.~\eqref{eq:restemp}.
\end{proof}

For a positively oriented contour $C$ that surrounds only this horizon,
\begin{equation}
 \mathcal{R}_h=\frac{1}{2\pi\ii}\oint_C\omega_T.
\end{equation}
The $4\pi$ in Eq.~\eqref{eq:restemp} is fixed by Euclidean regularity, not by the contour integral alone. In particular, the logarithmic monodromy of a radial tortoise coordinate should not be identified directly with a full Euclidean thermal period.

The residue belongs to a differential and is invariant under a locally invertible holomorphic radial change $z=h(w)$, with $h(0)=0$ and $h'(0)\ne0$. It is not the invariant coefficient of an arbitrarily redefined scalar function. Ramified maps, such as $z=w^2$, are not locally invertible at the pole and change the contour covering. They require a separate winding-number accounting.

Time normalization also matters. Under $\chi\mapsto a\chi$ with $a>0$, one has $\kappa\mapsto a\kappa$, $\Htemp\mapsto a\Htemp$, and $\mathcal R_h\mapsto\mathcal R_h/a$ when the adapted time is rescaled consistently. The temperature is therefore not determined by an unnormalized radial function alone. For a horizon with negative signed surface gravity one uses $\Htemp=|\kappa|/(2\pi)$; the present propositions use the outer-horizon sign convention in Eq.~\eqref{eq:horizon}.

\section{Noether charge and a contour representation of entropy}\label{sec:charges}
\subsection{Wald entropy and the horizon charge}
The Einstein--Hilbert scalar Lagrangian is
\begin{equation}
 L=\frac{R-2\Lambda}{16\pi G},\qquad
 \frac{\partial L}{\partial R_{abcd}}=
 \frac{g^{ac}g^{bd}-g^{ad}g^{bc}}{32\pi G}.
\end{equation}
For a bifurcation surface $B$ with binormal $\epsilon_{ab}\epsilon^{ab}=-2$, the Wald entropy is \cite{Wald1993,IyerWald1994}
\begin{equation}
 \SW=-2\pi\int_B\frac{\partial L}{\partial R_{abcd}}
 \epsilon_{ab}\epsilon_{cd}\dd A=\frac{A_H}{4G}.
 \label{eq:wald}
\end{equation}
The contraction of the two metric products with the binormals is $-4$, fixing the factor in the last equality. Thus the area-law normalization follows from the specified gravitational Lagrangian. It is not generated by a residue without gravitational input.

The gravitational Noether charge two-form is
\begin{equation}
 (\bm Q_\chi)_{ab}=-\frac{1}{16\pi G}\epsilon_{abcd}\nabla^c\chi^d.
 \label{eq:noether}
\end{equation}
At $B$, $\chi^a=0$ and $\nabla_a\chi_b=\kappa\epsilon_{ab}$. With the standard orientation yielding positive entropy,
\begin{equation}
 \QH\equiv\int_B\bm Q_\chi=\frac{\kappa A_H}{8\pi G}
 =\Htemp\SW.
 \label{eq:QH}
\end{equation}
For stationary horizons this gives the familiar charge--entropy relation. The identity $\nabla_a\chi_b=\kappa\epsilon_{ab}$ is being used on the bifurcation surface; it should not be applied indiscriminately to every horizon cross-section without examining additional terms.

\subsection{An explicitly normalized entropy differential}
Combining Eqs.~\eqref{eq:residue} and \eqref{eq:QH} gives
\begin{equation}
 \boxed{\SW=4\pi\QH\mathcal{R}_h.}
 \label{eq:product}
\end{equation}
For a fixed stationary solution, define
\begin{equation}
 \omega_S=4\pi\QH\,\omega_T.
 \label{eq:entropyform}
\end{equation}
Here $\QH$ is the horizon charge, held constant with respect to the auxiliary complex radial coordinate. It follows that
\begin{equation}
 \Res_{r=r_h}\omega_S=\SW,
 \qquad \SW=\frac{1}{2\pi\ii}\oint_C\omega_S.
 \label{eq:entropyresidue}
\end{equation}
Equation~\eqref{eq:entropyresidue} is a contour representation of an independently normalized entropy. It is not an equality between $\Res(e^{\cS}\dd z)$ and a Noether charge. Its two inputs are distinguishable: $\omega_T$ measures local radial redshift, whereas $\QH$ carries the gravitational coupling and the horizon area.

This distinction also makes the dimensions transparent. Since $[\omega_T]=L$ and $[\QH]=L^{-1}$, the residue of $\omega_S$ is dimensionless, as required for entropy. Under a constant rescaling of the Killing field, $\QH\mapsto a\QH$ and $\mathcal R_h\mapsto\mathcal R_h/a$, so their product is unchanged.

\subsection{Komar mass and the Smarr relation}
The Komar mass associated with a time translation has twice the gravitational Noether-charge normalization,
\begin{equation}
 \MK[\Sigma]=2\int_\Sigma\bm Q_\chi.
 \label{eq:komar}
\end{equation}
For a static, asymptotically flat, Ricci-flat exterior with $\Lambda=0$, the Killing identity and Stokes' theorem equate the relevant horizon and asymptotic Komar integrals. With the standard asymptotic time normalization,
\begin{equation}
 M_{\mathrm{ADM}}=\MK[\infty]=\MK[B]
 =\frac{\kappa A_H}{4\pi G}=2\Htemp\SW.
 \label{eq:smarr}
\end{equation}
The Noether charge $\QH$ is therefore $M_{\mathrm{ADM}}/2$ in this setting, not the full mass.

The Smarr relation is a scaling relation and must be distinguished from the differential first law \cite{Smarr1973,Bardeen1973}. For Schwarzschild, $\delta M=\Htemp\delta\SW$, while differentiating $M=2\Htemp\SW$ gives
\begin{equation}
 \delta M=2\Htemp\delta\SW+2\SW\delta\Htemp.
 \label{eq:dsmarr}
\end{equation}
The second term cannot be omitted. For a four-dimensional asymptotically flat Kerr--Newman solution, with the standard electromagnetic normalization and $\Lambda=0$,
\begin{align}
 M&=2\Htemp\SW+2\Omega_H J+\Phi_H Q,\\
 \delta M&=\Htemp\delta\SW+\Omega_H\delta J+\Phi_H\delta Q.
\end{align}
Matter, rotation, or a cosmological constant changes the charge balance and may add work or volume terms. Equality of unmodified horizon and asymptotic gravitational charges is not automatic in those cases.

\section{Explicit checks and counterexamples}\label{sec:tests}
\subsection{Schwarzschild geometry}
Take
\begin{equation}
 F(r)=1-\frac{2GM}{r},\qquad\psi=0,\qquad r_h=2GM.
\end{equation}
The exact differential near the horizon is
\begin{equation}
 \omega_T=\frac{r\dd r}{r-r_h}
 =\left(1+\frac{r_h}{z}\right)\dd z,
 \qquad \mathcal R_h=r_h=2GM.
\end{equation}
Thus
\begin{equation}
 \kappa=\frac{1}{4GM},\quad
 \Htemp=\frac{1}{8\pi GM},\quad
 A_H=16\pi G^2M^2,\quad
 \SW=4\pi GM^2.
\end{equation}
The charge is $\QH=M/2$, and Eq.~\eqref{eq:product} gives
\begin{equation}
 4\pi\QH\mathcal R_h=4\pi\frac{M}{2}(2GM)=4\pi GM^2=\SW.
\end{equation}
Direct differentiation also yields $\Htemp\dd\SW/\dd M=1$, verifying the first law rather than inferring it from pole order.

The derivative-square density in Eq.~\eqref{eq:trial} illustrates why exponentiated entropy behaves differently. For $\chi=\partial_t$ and $\psi=0$,
\begin{equation}
 \nabla_r\chi_t=-\frac12F',\qquad
 \nabla_t\chi_r=\frac12F',\qquad
 \nabla_a\chi_b\nabla^a\chi^b=-\frac12(F')^2.
 \label{eq:kinetic}
\end{equation}
This Lorentzian scalar is finite at a nonextremal Schwarzschild horizon. A finite-time radial integral over a thin horizon shell therefore has a regular local expansion. Its exponential need not have any pole at all, although the background satisfies the vacuum Einstein equation.

\subsection{Identical horizon data without vacuum dynamics}\label{sec:offshell}
Consider the asymptotically flat family
\begin{equation}
 F_\eps(r)=1-\frac{r_h}{r}
 +\eps\frac{r_h^2(r-r_h)^2}{r^4},\qquad\psi=0,
 \qquad r_h>0,\quad\eps>0.
 \label{eq:offshell}
\end{equation}
For $r>r_h$ both terms beyond the horizon factor give $F_\eps>0$. The horizon remains simple and
\begin{equation}
 F_\eps(r_h)=0,\qquad F'_\eps(r_h)=\frac{1}{r_h},\qquad
 \mathcal R_h=r_h,\qquad\kappa=\frac{1}{2r_h}.
\end{equation}
Its area is $4\pi r_h^2$ and its ADM mass is $r_h/(2G)$, because the deformation begins at order $r^{-2}$ at infinity. These quantities coincide with those of Schwarzschild at the same $r_h$.

Nevertheless, the mixed Einstein tensor for the $\psi=0$ metric obeys
\begin{equation}
 G^t{}_t=G^r{}_r=\frac{rF'+F-1}{r^2},\qquad
 G^\theta{}_\theta=G^\varphi{}_\varphi=\frac{F''}{2}+\frac{F'}{r}.
 \label{eq:einsteinspherical}
\end{equation}
Substitution gives
\begin{align}
 G^t{}_t&=-\eps\frac{r_h^2(r-r_h)(r-3r_h)}{r^6},\label{eq:residualtt}\\
 G^\theta{}_\theta&=\eps\frac{r_h^2(r^2-6r_hr+6r_h^2)}{r^6}.
 \label{eq:residualtheta}
\end{align}
For example, $G^t{}_t(2r_h)=\eps/(64r_h^2)$ and the limiting angular component at the horizon is $G^\theta{}_\theta(r_h)=\eps/r_h^2$. The metric therefore violates the vacuum equation with $T_{ab}=0$ and $\Lambda=0$ for every $\eps>0$.

This is a geometric counterexample to sufficiency: the complete leading horizon residue, not just its pole order, can agree with Schwarzschild while the vacuum field equation fails. The deformation may be assigned an effective stress tensor $G_{ab}/(8\pi G)$, but that changes the prescribed matter problem. No claim is made that it satisfies a particular matter action or energy condition. It also shows that endpoint mass and horizon data alone do not reconstruct the local exterior dynamics.

\subsection{Extremal horizons and higher-order poles}
For a charged static geometry written with a geometrical charge parameter $q$ of dimension length,
\begin{equation}
 F(r)=1-\frac{2GM}{r}+\frac{q^2}{r^2}
 =\frac{(r-r_+)(r-r_-)}{r^2},\quad
 r_\pm=GM\pm\sqrt{G^2M^2-q^2},
\end{equation}
the nonextremal outer-horizon residue is
\begin{equation}
 \mathcal R_+=\frac{r_+^2}{r_+-r_-},\qquad
 \kappa_+=\frac{r_+-r_-}{2r_+^2}.
 \label{eq:RNresidue}
\end{equation}
In the extremal limit $q=GM=r_h$, the zeros merge and
\begin{equation}
 F(r)=\frac{(r-r_h)^2}{r^2},\qquad
 \omega_T=\left(\frac{r_h^2}{z^2}+\frac{2r_h}{z}+1\right)\dd z.
 \label{eq:extremal}
\end{equation}
The double pole occurs in an exact Einstein--Maxwell solution. The residue $2r_h$ remains finite, even though $\kappa=0$. Substituting this residue into Eq.~\eqref{eq:restemp} would incorrectly predict a nonzero temperature. There is no contradiction: the simple-zero hypothesis has failed, and the nonextremal Euclidean polar-coordinate derivation does not apply. In particular, taking an extremal limit and taking the residue at a merged pole are different operations.

\subsection{What the checks establish}
Table~\ref{tab:checks} summarizes the scope of the construction.We check the explicit algebraic identities, Schwarzschild thermodynamic factors, a regular radial reparametrization, the nonextremal charged case, the extremal Laurent form, and the vacuum residual of Eq.~\eqref{eq:offshell}. 

\begin{table}[tb]
\centering\small
\caption{Logical role of the principal calculations.}\label{tab:checks}
\begin{tabularx}{\textwidth}{@{}p{0.25\textwidth}XX@{}}
\toprule
Setting & Verified statement & Limitation \\
\midrule
Null-vector variation & Stationarity for all null directions gives Eq.~\eqref{eq:einstein}. & Requires the prescribed functional and conserved stress tensor. \\
Nondegenerate static horizon & $\mathcal R_h=1/(2\kappa)$. & Uses analyticity and fixed time normalization. \\
Schwarzschild & $\SW=4\pi\QH\mathcal R_h$ and $M=2\Htemp\SW$. & Uses the Einstein--Hilbert charge normalization. \\
Deformed exterior & Horizon residue and asymptotic mass are unchanged. & Vacuum field equations fail for $\eps>0$. \\
Extremal charged horizon & A double pole coexists with exact field equations. & The simple-pole temperature formula is inapplicable. \\
\bottomrule
\end{tabularx}
\end{table}

\section{Physical interpretation and further tests}\label{sec:discussion}
\subsection{A controlled thermodynamic interpretation}
The variational and analytic parts answer different questions. The antisymmetrized functional encodes dynamics through a curvature commutator and a requirement on every null direction. The temperature differential encodes the leading redshift at a specified nondegenerate horizon. The Wald construction supplies the entropy appropriate to a specified Lagrangian. Their compatibility is useful, but it does not imply that the residue alone contains all of the bulk equations.

Within this separation, the interpretation of geometry through thermodynamics remains substantive. The Einstein equation can be expressed as null-direction entropy stationarity, and the temperature of a static horizon can be extracted from a local analytic coefficient. The contour form in Eq.~\eqref{eq:entropyresidue} packages these data into one expression while making its normalization explicit. It does not predict a correction to the Hawking temperature or the area law in Einstein gravity.

The construction is also distinct from a proof that a gravitational singularity causes attraction. A meromorphic pole in $\omega_T$ occurs at a regular nonextremal horizon; its physical role is tied to redshift and the thermal period. A curvature singularity or incomplete geodesic is a different diagnostic. A proposal connecting such a breakdown of classical geometry to a thermodynamic singularity must identify the relevant ensemble, potential, and limiting process. Neither a divergence nor a zero temperature by itself specifies a force law.

\subsection{Pole order, dissipation, and global equilibrium}
Higher-order poles cannot be identified universally with entropy production. The extremal example provides a direct obstruction. Conversely, a simple pole does not establish global thermal equilibrium: several horizons can have different surface gravities, and one Euclidean period need not regularize all of them simultaneously. A thermal state and its boundary conditions must be specified separately.

Non-equilibrium thermodynamics requires an entropy-balance equation, schematically
\begin{equation}
 \dd S=\frac{\delta Q}{T}+\dd_iS,
 \qquad \dd_iS\ge0,
\end{equation}
together with a physical model for $\dd_iS$. Such terms have been studied in spacetime thermodynamics \cite{Eling2006}. A Laurent coefficient acquires a dissipative meaning only after a derivation connects it to the fluxes or transport coefficients in that balance law. No such connection follows from pole classification alone.

\subsection{Comparison with topological temperature methods}
The Robson--Villari--Biancalana approach relates the temperature to the topology and curvature of a Euclidean reduction \cite{Robson2019}. Here the radial differential is fixed by null propagation, and its normalization is checked against local Euclidean regularity. We do not derive its contour from a Gauss--Bonnet integral or claim equivalence to all global topological prescriptions. Agreement for a Schwarzschild horizon is a consistency check, not a new temperature prediction.

An extension of a topological or residue method should state which additional input it changes: the integration constant, horizon selection, time normalization, global Euclidean structure, or underlying field equation. A proposed new physical effect would then require a quantity that differs from the baseline calculation after these conventions have been matched.

\subsection{Extensions and falsifiable requirements}
The null-vector construction extends to suitable divergence-free curvature tensors in Lovelock theories \cite{PadmanabhanParanjape2007}. It cannot be transferred unchanged to a generic $f(R)$ theory, where derivatives of $\partial L/\partial R_{abcd}$ contribute to the equations. The horizon entropy and Noether charge must also be recalculated for the chosen Lagrangian. Equation~\eqref{eq:product} can retain its form in a static setting when the relevant stationary charge relation holds, but this does not supply the modified dynamics.

A stronger theory based on an entropy-generating analytic object would need to specify that object independently of the desired residue, derive its transformation properties, fix its additive and multiplicative normalizations, and recover the field equations beyond a single solution. It must distinguish the family \eqref{eq:offshell} from Schwarzschild when vacuum dynamics is required, and explain why the extremal limit invalidates the simple-pole formula. These are concrete tests, rather than interpretive analogies. Establishing a correction to a temperature, entropy, or response function would require additional dynamical or microscopic input not supplied here.

\section{Conclusions}
We have formulated a consistent relation between null-vector entropy stationarity, horizon residues, and gravitational charges. The Einstein equation follows from the established antisymmetrized entropy functional once the null constraint and stress conservation are treated explicitly. A bare derivative-square functional does not provide the same derivation.

For a locally analytic static spherical geometry with a nondegenerate outer horizon, the radial propagation differential has residue $\mathcal R_h=1/(2\kappa)$. Euclidean regularity fixes $\Htemp=(4\pi\mathcal R_h)^{-1}$. The independently normalized horizon charge then gives $\SW=4\pi\QH\mathcal R_h$, reproducing the standard Schwarzschild relations and separating the Komar mass from the Noether charge.

The counterexamples delimit the result. An asymptotically flat deformation preserves the Schwarzschild horizon residue, area, surface gravity, and mass but fails the vacuum equation. An extremal charged solution has a higher-order pole. Consequently, the absence of higher-order poles is neither a universal criterion for gravitational dynamics nor a general condition of vanishing entropy production. The proposed contour representation is a controlled reformulation of stationary horizon thermodynamics; a new microscopic explanation or physical correction remains a question for further work.

\appendix
\section{Connection with local horizon entropy balance}\label{app:jacobson}
For completeness, the null projection in Eq.~\eqref{eq:nullprojection} can be checked by the local horizon argument \cite{Jacobson1995}. Let $k^a$ be affinely parametrized by $\lambda$, with $\lambda=0$ at a chosen point and local expansion and shear zero there. To leading order, Raychaudhuri's equation gives
\begin{equation}
 \frac{\dd\theta}{\dd\lambda}=-R_{ab}k^ak^b+O(\lambda),
 \qquad\theta=-\lambda R_{ab}k^ak^b+O(\lambda^2).
\end{equation}
For an approximate boost generator $\chi^a=-\kappa\lambda k^a$ on the past horizon segment, the heat flux and entropy change are
\begin{align}
 \delta Q&=-\kappa\int\lambda T_{ab}k^ak^b\,\dd\lambda\dd A,\\
 \delta S&=-\eta\int\lambda R_{ab}k^ak^b\,\dd\lambda\dd A.
\end{align}
Imposing $\delta Q=T\delta S$ with $T=\kappa/(2\pi)$ and $\eta=1/(4G)$ yields
\begin{equation}
 R_{ab}k^ak^b=8\pi GT_{ab}k^ak^b.
\end{equation}
The tensor lemma and conservation then recover Eq.~\eqref{eq:einstein}. This calculation provides a separate thermodynamic check of the coefficient, but assumes the area entropy density and the local Unruh temperature. It does not derive those inputs or establish an equivalence with a Laurent-pole condition.

\section{Spherical charge and an explicit coordinate check}
For $\psi=0$ in Eq.~\eqref{eq:metric}, direct evaluation of the Komar integral gives
\begin{equation}
 \MK(r)=\frac{r^2F'(r)}{2G}.
\end{equation}
Schwarzschild therefore has $\MK(r)=M$. For the deformed geometry in Eq.~\eqref{eq:offshell},
\begin{equation}
 \MK(r)=\frac{r_h}{2G}+\frac{\eps}{G}
 \left(-\frac{r_h^2}{r}+\frac{3r_h^3}{r^2}-\frac{2r_h^4}{r^3}\right).
\end{equation}
The deformation vanishes at the horizon and at infinity but not on generic intermediate spheres. Equal endpoint values therefore do not establish local vacuum charge conservation.

For the radial coordinate change $z=aw+bw^2$ with $a\ne0$, the singular part of a simple-pole differential transforms as
\begin{equation}
 \mathcal R_h\frac{\dd z}{z}
 =\mathcal R_h\frac{a+2bw}{aw+bw^2}\dd w
 =\mathcal R_h\left(\frac1w+\frac ba+O(w)\right)\dd w.
\end{equation}
The residue is unchanged. If the Jacobian were omitted and only the scalar coefficient were re-expanded, an incorrect coordinate dependence would result.

\end{document}